\numberwithin{equation}{section}
\newcommand{\bm}[1]{\mathbf{#1}}
\newcommand{\R}{{\mathbb R}}
\newcommand{\RR}{{\mathbb R}}
\newcommand{\MM}{{\mathbb M}}
\DeclareMathAlphabet{\pcal}{OMS}{zplm}{m}{n}
\begin{document}

\title{Block Matrix Formulations for Evolving Networks}


\author{Caterina Fenu\thanks{Department of Computer Science,
University of Pisa, Largo Bruno Pontecorvo, 3, 56127 Pisa, Italy. E-mail: \texttt{kate.fenu@unica.it}.} 
\and
Desmond J.~Higham\thanks{Department of Mathematics and Statistics,
University of Strathclyde, Glasgow, U.K. E-mail:
\texttt{d.j.higham@strath.ac.uk}.}
Research supported by EPSRC/RCUK Established Career fellowship 
EP/M00158X/1 and a Royal Society/Wolfson Research Merit Award. }

\pagestyle{myheadings}
\markboth{C.~Fenu and D.~J.~Higham}{Block Matrix Formulations for Evolving Networks}


\maketitle

\begin{abstract} 
Many types of pairwise interaction take the form of a fixed set of nodes 
with edges that appear and disappear over time.
In the case of discrete-time evolution, the resulting evolving network may be 
represented by a time-ordered sequence of adjacency matrices. 
We consider here the issue of representing the system as a single, higher dimensional block matrix,
built from the individual time-slices.
 We focus on the task of computing network centrality measures.
From a modeling perspective, we show that there is a suitable block formulation that 
allows us to recover dynamic centrality measures respecting time's arrow. 
From a computational perspective, we show that the new block formulation leads to the design of 
more effective numerical algorithms. 
\end{abstract}

\begin{keywords} 
Centrality, complex network, evolving network, graph, tensor. 
\end{keywords}

\begin{AMS}
05C50, 15A69
\end{AMS}

\section{Introduction}\label{sec:intro}
A \emph{multilayer network}, also called \emph{network of networks} \cite{multirevb,multirevjcn}, is a graph
where connections are formed within and between well-defined slices, each of which is itself a network.   
In this case it is natural to regard the connectivity structure as a three-dimensional tensor.
We focus here on a specific type of multilayering where each slice represents a 
time point.   
More precisely, let $\{G^{[k]}\}_{k=1}^M = \left(V,\{E^{[k]}\}_{k=1}^M \right)$ 
be a sequence of unweighted graphs evolving in discrete time. Here,  the set of nodes $V$,
with $|V| = n$, is fixed and the evolution in time is given by the change in the set of edges,
$E^{[k]}$.
With this notation, given the 
ordered sequence of time points $\{t_k\}_{k=1}^M$, the network at time $t_k$ is represented by its $n \times n$ adjacency matrix $A^{[k]}$. As usual for unweighted networks, the $(i,j)$th entry of $A^{[k]}$ 
equals $1$ if there is an edge from node $i$ to node $j$ at time $t_k$, and $0$ otherwise.  
This type of connectivity structure arises naturally in many types of human interaction. 
For example, within a given population, we may record physical interactions, phone calls, text messages, emails, social media contacts  or correlations between behavior such as energy usage or on-line shopping; see 
\cite{HS12} for an overview.

Although we may regard 
$\{  
A^{[k]}
\}_{k=1}^M
$
as a three dimensional tensor, 
we emphasize 
that, in this context,  the third dimension is very different from the first two.
Typical quantities of interest are invariant to the ordering the nodes---we may consistently 
 permute the rows and columns of each $A^{[k]}$, or, equivalently, we may relabel the nodes, 
without affecting our conclusions.
However, for most purposes, it is not appropriate to reorder the time points. 
This raises a question that motivates the work presented here: to what extent 
can we rely on ideas from  the generic 
multi-layer/tensor viewpoint
when studying evolving networks?
More specifically, focusing on the idea of
\emph{flattening} a tensor into a single, larger, two-dimensional matrix (also known as 
\emph{reshaping}, \emph{unfolding} or  \emph{matricizing}) 
\cite{GVL3,KoBa09,SVL12}, how do we express an evolving network as a single, large, block matrix?
We address this question in the context of computing node centrality.

The material is organized as follows.  
In section~\ref{sec:cent}, we review a class of centrality measures based on the 
concept of dynamic walks.
Section~\ref{sec:block} then presents a flattening of the adjacency matrices from which these  
centrality measures can be recovered using standard matrix functions.
Methods for the computation of the centrality measures using the new block approach are presented in~\ref{sec:comptask}.  
Computational experiments on synthetic and voice call data are described in 
section~\ref{sec:comp}.  Within these tests, we also study the 
supra-centrality matrix formulation from 
\cite{TMCPM15}.
Final conclusions are given in section~\ref{sec:conc}.

\section{Centrality} \label{sec:cent}
In this section we review the concepts of time-dependent centrality measures 
from~\cite{AH15,GHsr12, GHPE11}  
and introduce a new connection between them. 
Centrality measures are widely used for identifying influential players in a network.
Many such measures arose within the field of social network analysis, motivated either explicitly or implicitly from the idea that the network nodes communicate, or pass information, along the edges; see, for example,
\cite{B05,Free79}.
In this way, centrality quantifies a sense in which a node takes part in  traversals.  
Quoting from \cite{BE2006}
\lq\lq 
All measures of centrality assess a node's involvement in the walk structure of a network.\rq\rq

For the time-dependent links that we consider here, it has been pointed out by several  authors that any type of message-passing (or disease-passing) basis for centrality should
account for the time-ordering of the interactions; see, for example, \cite{HS12}.   
If $X$ meets $Y$ today and $Y$ meets $Z$ tomorrow, then the path $X \to Y \to Z$ makes sense
from a message-passing point of view, but not
$Z \to X \to Y$. Traversals must respect the arrow of time.

In \cite{GHPE11}, as a means to develop a time-dependent centrality measure,  
the authors introduced the notion of a \emph{dynamic walk} as follows.
\begin{definition}
\label{def:dw}
A \emph{dynamic walk} of length $w$ from node $i_1$ to node $i_{w+1}$ consists of a sequence of edges $i_1 \rightarrow i_2$, $i_2 \rightarrow i_3$, \dots, $i_w \rightarrow i_{w+1}$ and a nondecreasing sequence of times $t_{r_1} \leq t_{r_2} \leq \cdots \leq t_{r_w}$ such that $A^{[r_m]}_{i_m,i_{m+1}} \ne 0$. 
\end{definition}

This definition was used to define the \emph{dynamic communicability matrix} 
\begin{equation}\label{dcm}
\mathcal{Q}^{[j]} = \left(I - aA^{[1]}\right)^{-1} \left(I - aA^{[2]}\right)^{-1} \cdots \left(I - aA^{[j]}\right)^{-1} = \prod_{s=1}^j \left(I - aA^{[s]}\right)^{-1}.
\end{equation}
We assume henceforth that the parameter $a$ satisfies 
$a < 1/\max_s{\rho(A^{[s]})}$, with $\rho(A^{[s]})$ denoting the spectral radius of the matrix $A^{[s]}$. Each resolvent in \eqref{dcm} may then be expanded as 
$$
\left(I - aA^{[s]}\right)^{-1} = \sum_{k=0}^{\infty}a^k \left(A^{[s]}\right)^k.
$$  
In view of this, $\left(\mathcal{Q}^{[j]}\right)_{ij}$ can be seen as a weighted sum of the number of dynamic walks from $i$ to $j$ using the ordered sequence $\{A^{[k]}\}_{k=1}^M$, in which the count for walks of length $w$ is scaled by $a^w$. The overall ability of nodes to 
broadcast or receive information in this sense is given by the row and column sums 
\begin{equation}\label{br}
C^{{\rm broadcast}} = \mathcal{Q}^{[j]} \bm{1} \qquad \text{and} \qquad C^{{\rm receive}} = {\mathcal{Q}^{[j]}}^T \bm{1},
\end{equation}
respectively, where $\bm{1}$ is the vector of all ones. See \cite{GHPE11} for a more detailed explanation.  Numerical tests in \cite{GHPE11} showed that these broadcast 
and receive centralities
are generally very different from the measures that arise when we  ignore time-dependency and consider only the aggregate adjacency matrix $\sum_{k=1}^{M} A^{[k]}$,    
 and subsequent work in \cite{LMGAOH13}
 showed that they were better able to match the views of 
social media experts when applied to Twitter data. 

Motivated by the treatment of static networks in
\cite{EH}, the authors in \cite{AH15} used the dynamic communicability matrix idea to introduce two kinds of dynamic betweenness: the \emph{nodal betweenness} of a node and the \emph{temporal betweenness} of a time point.

Let $\bar{A}^{[k]}_r$ denote the matrix obtained from $A^{[k]}$ by removing all the edges involving node $r$, that is, $\bar{A}^{[k]}_r = A^{[k]} - E_r^{[k]}$, where $E_r^{[k]}$ has nonzero elements only in row and column $r$, which coincide with those of $A^{[k]}$. Then, the matrix
$$
\bar{\mathcal{Q}}^{[M]}_r = \prod_{s=1}^M \left(I - a\bar{A}^{[s]}_r\right)^{-1}
$$
quantifies the ability of nodes to  communicate without using node $r$.
The \emph{nodal betweenness of node $r$}, \cite{AH15}, is defined as
\begin{equation}\label{nb}
{\rm NB}_r := \frac{1}{(n-1)^2 - (n-1)} \mathop{\sum\sum}_{i \neq j \neq r}\frac{(\mathcal{Q}^{[M]})_{ij} - (\bar{\mathcal{Q}}^{[M]}_r)_{ij}}{(\mathcal{Q}^{[M]})_{ij}}.
\end{equation}
This measure quantifies the relative decrease in information exchange 
when node $r$ is removed from the network. 

Let $\{\widehat{A}^{[k,q]}\}_{k=1}^M$ denote the adjacency matrix sequence obtained on  replacing $A^{[q]}$ with $0$, that is
$$
\widehat{A}^{[k,q]} = (1-\delta_{kq})A^{[k]},
$$  
where $\delta_{kq}$ is the Kronecker delta. Then, the matrix
$$
\widehat{\mathcal{Q}}^{[M,q]} = \prod_{s=1}^M \left(I - a\widehat{A}^{[s,q]}\right)^{-1} = \mathop{\prod_{s=1}}_{s \neq q}^M \left(I - a A^{[s]}\right)^{-1}
$$
describes how well nodes interchange information without using the connections at time $q$.
The \emph{temporal betweenness}, \cite{AH15}, of time point $q$ is defined as
\begin{equation}\label{tb}
{\rm TB}^{[M,q]} := \frac{1}{(n-1)^2 - (n-1)} \mathop{\sum\sum}_{i \neq j}\frac{(\mathcal{Q}^{[M]})_{ij} - (\widehat{\mathcal{Q}}^{[M,q]})_{ij}}{(\mathcal{Q}^{[M]})_{ij}}.
\end{equation}
We refer to \cite{AH15} for further details and illustrative examples involving these measures, and we note that in practical use the matrices $\mathcal{Q}^{[M]}$, $\bar{\mathcal{Q}}^{[M]}_r$ and $\widehat{\mathcal{Q}}^{[M,q]}$ should be properly scaled in order to prevent over/underflows.

For future convenience, we extend the notation to allow for walks that start and finish at arbitrary 
time points.
 Let us denote by $\mathcal{Q}^{[i,j]}$ the dynamic communicability matrix obtained by multiplying the resolvents corresponding to the ordered sequence $\{A^{[s]}\}_{s=i}^j, 1 \leq i \leq j \leq M$, that is, 
\begin{equation}\label{dcm2}
\mathcal{Q}^{[i,j]} = \prod_{s=i}^j \left(I - aA^{[s]}\right)^{-1} = \left(I - aA^{[i]}\right)^{-1} \cdots \left(I - aA^{[j]}\right)^{-1}.
\end{equation}
With this notation we can quantify broadcast and receive centralities over any subinterval.
In general, we may use 
\begin{equation}\label{br2}
C_{{\rm broadcast}}^{[i,j]} = \mathcal{Q}^{[i,j]} \bm{1} \qquad \text{and} \qquad C_{{\rm receive}}^{[i,j]} = {\mathcal{Q}^{[i,j]}}^T \bm{1},
\end{equation}
to quantify the ability of a node to spread or receive information, respectively, taking into account the evolution of the network between $t_i$ and $t_j$.

In many applications, such as the spread of rumors or disease, recent walks 
are more important than those that started a long time ago. For this reason, the authors in \cite{GHsr12} introduced the \emph{running dynamic communicability matrix}, 
 $\mathcal{S}^{[j]}$, obtained recursively, starting from $\mathcal{S}^{[0]} = 0$, as   
\begin{equation}\label{rdcm}
\mathcal{S}^{[j]} = \left(I + e^{-b\Delta t_j}\mathcal{S}^{[j-1]}\right)\left(I-aA^{[j]}\right)^{-1} - I, \qquad j = 1, \dots, M,
\end{equation}
where $\Delta t_j = t_j - t_{j-1}$.
In this recurrence the parameter $a$ is used to penalize long walks and the parameter $b$ to filter out old activity.
Overall,  $\mathcal{S}^{[j]}$
maintains walk counts that are scaled in terms of length $w$ by $a^w$ and chronological 
age $t$ by $e^{-bt}$.  
Running versions of the  broadcast and receive 
communicabilities are then given by the row/column sums of the matrix $\mathcal{S}^{[j]}$, that is
\begin{equation}\label{rbr}
\mathcal{S}^{[j]} \bm{1} \qquad \text{and} \qquad {\mathcal{S}^{[j]}}^T \bm{1}.
\end{equation}

For use in the next section, 
the following lemma points out a connection between the running dynamic communicability matrix $\mathcal{S}^{[j]}$ in \eqref{rdcm} and the dynamic communicability matrices $\mathcal{Q}^{[i,j]}$ in \eqref{dcm2}.

\begin{lemma}\label{thm:rdcm}
For the 
running dynamic communicability matrix 
$\mathcal{S}^{[j]}$ in \eqref{rdcm} we have 
$$
\mathcal{S}^{[j]} = \sum_{i=1}^j \left(1-e^{-b\Delta t_i} \right) e^{-b \sum_{\ell = i+1}^j \Delta t_{\ell}} \mathcal{Q}^{[i,j]} - I,
$$
where $\Delta t_1 = \infty$.
\end{lemma}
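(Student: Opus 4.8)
The plan is to argue by induction on $j$, taking the recurrence \eqref{rdcm} as the definition of $\mathcal{S}^{[j]}$ and exploiting the single structural fact that the dynamic communicability matrices telescope under right multiplication, namely $\mathcal{Q}^{[i,j-1]}\left(I - aA^{[j]}\right)^{-1} = \mathcal{Q}^{[i,j]}$ for $i \leq j-1$, together with the convention that $\mathcal{Q}^{[j,j]} = \left(I - aA^{[j]}\right)^{-1}$ is the empty-to-one product. Throughout I would use the boundary conventions that $\Delta t_1 = \infty$ (so $e^{-b\Delta t_1} = 0$ and the $i=1$ coefficient $1 - e^{-b\Delta t_1}$ equals $1$), that an empty sum $\sum_{\ell=j+1}^{j}\Delta t_\ell$ equals $0$, and hence that the $i=j$ summand in the claimed formula carries exponential weight $e^{0} = 1$.

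For the base case $j=1$, the claimed identity reduces to $\mathcal{S}^{[1]} = \left(1 - e^{-b\Delta t_1}\right)\mathcal{Q}^{[1,1]} - I = \mathcal{Q}^{[1,1]} - I = \left(I - aA^{[1]}\right)^{-1} - I$, which is exactly what \eqref{rdcm} gives at $j=1$ since $\mathcal{S}^{[0]} = 0$. For the inductive step, I would assume the formula at $j-1$, substitute it for $\mathcal{S}^{[j-1]}$ inside $I + e^{-b\Delta t_j}\mathcal{S}^{[j-1]}$, and then absorb the factor $e^{-b\Delta t_j}$ into each exponential weight via $e^{-b\Delta t_j}\,e^{-b\sum_{\ell=i+1}^{j-1}\Delta t_\ell} = e^{-b\sum_{\ell=i+1}^{j}\Delta t_\ell}$. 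The scalar pieces then recombine: the $I$ from the recurrence and the $-e^{-b\Delta t_j} I$ coming from the trailing $-I$ in $\mathcal{S}^{[j-1]}$ collapse to $\left(1 - e^{-b\Delta t_j}\right) I$.

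Right-multiplying the resulting expression by $\left(I - aA^{[j]}\right)^{-1}$ then turns each $\mathcal{Q}^{[i,j-1]}$ into $\mathcal{Q}^{[i,j]}$ for $i = 1,\dots,j-1$, while the term $\left(1 - e^{-b\Delta t_j}\right) I$ becomes $\left(1 - e^{-b\Delta t_j}\right)\mathcal{Q}^{[j,j]}$, which is precisely the missing $i=j$ summand (its exponential weight being $1$). Subtracting the final $I$ prescribed by \eqref{rdcm} yields $\mathcal{S}^{[j]} = \sum_{i=1}^{j}\left(1 - e^{-b\Delta t_i}\right) e^{-b\sum_{\ell=i+1}^{j}\Delta t_\ell}\,\mathcal{Q}^{[i,j]} - I$, completing the induction.

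This is essentially a bookkeeping argument, so I do not anticipate a genuine obstacle; the only points needing care are the boundary conventions — chiefly that $\Delta t_1 = \infty$ makes the oldest walk contribution enter with full weight and that the empty product and empty sum behave as stated — and keeping scrupulous track of which terms are scalar multiples of $I$ as opposed to nontrivial matrices, so that they are routed to the correct place after the multiplication by $\left(I - aA^{[j]}\right)^{-1}$.
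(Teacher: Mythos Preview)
Your proposal is correct and follows essentially the same route as the paper: an induction on $j$, with the base case $\mathcal{S}^{[1]} = \mathcal{Q}^{[1,1]} - I$ and the inductive step driven by the telescoping identity $\mathcal{Q}^{[i,j-1]}\left(I - aA^{[j]}\right)^{-1} = \mathcal{Q}^{[i,j]}$ together with the collapse of the scalar $I$-terms into the new $i=j$ summand. Your write-up is in fact slightly more careful than the paper's about the boundary conventions (the empty sum at $i=j$ and the role of $\Delta t_1 = \infty$), but the argument is otherwise identical.
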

\begin{proof}
The proof uses induction.
For $j = 1$, we have 
$$
\mathcal{S}^{[1]} = \left(I + e^{-b\Delta t_1}\mathcal{S}^{[0]}\right)\left(I-aA^{[1]}\right)^{-1} - I = \mathcal{Q}^{[1,1]} - I.
$$
Suppose that the identity is valid for $j = k-1$. We will then show it is valid also for $j = k$.
We have
$$
\begin{aligned}
\mathcal{S}^{[k]} &= \left(I + e^{-b\Delta t_k}\mathcal{S}^{[k-1]}\right)\mathcal{Q}^{[k,k]} - I \\
&= \mathcal{Q}^{[k,k]} + e^{-b\Delta t_k} \left[ \sum_{i=1}^{k-1} \left(1-e^{-b\Delta t_i} \right) e^{-b \sum_{\ell = i+1}^{k-1} \Delta t_{\ell}} \mathcal{Q}^{[i,k-1]} - I\right]  \mathcal{Q}^{[k,k]} - I \\
&= \sum_{i=1}^{k-1} \left(1-e^{-b\Delta t_i} \right) e^{-b \sum_{\ell = i+1}^{k-1} \Delta t_{\ell}} e^{-b\Delta t_k} \mathcal{Q}^{[i,k-1]}\mathcal{Q}^{[k,k]} + \left(1-e^{-b\Delta t_k} \right) \mathcal{Q}^{[k,k]} - I \\
&= \sum_{i=1}^{k-1} \left(1-e^{-b\Delta t_i} \right) e^{-b \sum_{\ell = i+1}^{k} \Delta t_{\ell}} \mathcal{Q}^{[i,k]} + \left(1-e^{-b\Delta t_k} \right) \mathcal{Q}^{[k,k]} - I \\
&= \sum_{i=1}^{k} \left(1-e^{-b\Delta t_i} \right) e^{-b \sum_{\ell = i+1}^{k} \Delta t_{\ell}} \mathcal{Q}^{[i,k]}  - I,
\end{aligned}
$$ 
where we used the fact that $\mathcal{Q}^{[i,k-1]}\mathcal{Q}^{[k,k]} = \mathcal{Q}^{[i,k]}$.
\end{proof}


\section{Block matrix formulations} \label{sec:block}

Our aim now is to study   
block-matrix
representations of the data 
 $\{ A^{[k]} \}_{k=1}^{M}$
that transform the network sequence into an \lq\lq equivalent\rq\rq\ large, static network
with adjacency matrix of dimension $M  n$.
We have two main requirements for such a representation.  
\begin{itemize}
 \item 
  We would like to be able to interpret this static network in terms of the interactions represented
by the original data.
 \item  We would like to be able to recover the dynamic centrality measures
discussed in the previous section by applying standard matrix functions to this larger network.   
 \end{itemize}

In the case of general multi-layer networks, the authors in 
\cite{SRCFGB13} introduce an \emph{influence matrix} $W \in \RR^{ M \times M}$ such that  
$w_{ij} \ge 0 $ measures the influence of layer $j$ on layer $i$.
They then study node centrality via the $M n$ by $M n$ matrix 
\[
\begin{bmatrix}
 w_{11} A^{[1]} &  w_{12} A^{[2] }     &  \ldots      &          &   w_{1M} A^{[M] }  \\
  w_{21} A^{[1]}      &  w_{22} A^{[2]} &    \ldots   &           &  w_{2M} A^{[M]}  \\
   \vdots     &   \vdots      & \ddots &       &    \vdots \\
         &         &        &        &   \\
      w_{M1} A^{[1]}  &   w_{M2} A^{[2] }      &   \hdots     &           & w_{MM} A^{[M]}
\end{bmatrix}.
\]

In our specific context, where (a) the layers represent time slices that have a natural ordering, 
and (b) centrality concepts are motivated from traversals around the network,  
 this formulation appears to add little value. 
If each $M \times M$ block represents a time slice, then the existence of an edge at one 
time slice should not influence the propensity for traversal \emph{within} some other time slice. 
Hence, only the simple block-diagonal version ($w_{ij} = 0$ for all $i \neq j$) makes intuitive
sense in our context. 

Returning  to Definition~\ref{def:dw}, we note that a dynamic walk may use any number of edges within a time slice and may then \emph{wait} until a later time slice and continue the traversal.
For dynamic communicability defined via 
(\ref{dcm}),  
within each time  slice, use of an edge penalizes the walk count by $a$ and 
moving from one time slice to the next carries no penalty.
For the more general running measure based on
(\ref{rdcm}), waiting until the next time slice costs a factor $e^{-b \Delta t_j}$.
We may capture this type of weighted count by introducing a link \emph{from a node 
at one time slice to the equivalent node at the next time slice}; that is, by adding 
identity matrices along the first superdiagonal to obtain
 $B \in \R^{Mn \times Mn}$ defined as 
\begin{equation}
B := \begin{bmatrix}
\alpha A^{[1]} &    \beta_2 I    &        &           &   \\
        & \alpha A^{[2]} &    \beta_3 I   &           &   \\
        &         & \ddots &   \ddots  &   \\
        &         &        & \alpha A^{[M-1]} & \beta_M I \\
        &         &        &           & \alpha A^{[M]}
\end{bmatrix}.
\label{bigmat}
\end{equation}
Here $ \{ \beta_{\ell} \}_{\ell = 2, M}$ and $\alpha$ are parameters.
The next theorem confirms that this structure captures the required communicabilities when 
$\alpha = a$ and $\beta_{\ell} \equiv 1$ or $\beta_{\ell} = e^{-b\Delta t_\ell}$ for the two cases.

\begin{theorem}\label{thm:mf}
The dynamic communicability matrices $\mathcal{Q}^{[i,j]}$ in \eqref{dcm2} and the running dynamic communicability matrices $\mathcal{S}^{[j]}$ in \eqref{rdcm} can be computed 
by applying the function  $f(x) = (1-x)^{-1}$ to the matrix $B$ in \eqref{bigmat}.
\end{theorem}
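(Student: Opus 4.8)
The plan is to exploit the block upper-triangular structure of $B$ to obtain $f(B)=(I-B)^{-1}$ block by block, then to read off $\mathcal{Q}^{[i,j]}$ directly and to reconstruct $\mathcal{S}^{[j]}$ using Lemma~\ref{thm:rdcm}. First I would note that $f(B)$ is well defined: since $B$ is block upper triangular its eigenvalues are those of the diagonal blocks $\alpha A^{[k]}=aA^{[k]}$, so $\rho(B)=a\max_k\rho(A^{[k]})<1$ by the standing assumption on $a$; hence $1$ is not an eigenvalue of $B$, $f(x)=(1-x)^{-1}$ is analytic near the spectrum of $B$, and $f(B)=(I-B)^{-1}$.

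Next I would split $I-B=D-N$, where $D$ is block diagonal with diagonal blocks $I-aA^{[k]}$ (invertible, again by the assumption on $a$) and $N$ is strictly upper block-bidiagonal carrying $\beta_{\ell}I$ on the first block-superdiagonal. Then $I-B=D\,(I-D^{-1}N)$, and $D^{-1}N$ is strictly upper block-triangular, hence nilpotent with $(D^{-1}N)^{M}=0$, so
$$
f(B)=(I-D^{-1}N)^{-1}D^{-1}=\Bigl(\,\sum_{k=0}^{M-1}(D^{-1}N)^{k}\Bigr)D^{-1}
$$
is an exact finite sum. Since $D^{-1}N$ has $(p,p+1)$ block $\beta_{p+1}(I-aA^{[p]})^{-1}$, the power $(D^{-1}N)^{k}$ is supported on the $k$th block-superdiagonal, with $(p,p+k)$ block the ordered product $\bigl(\prod_{m=1}^{k}\beta_{p+m}\bigr)\prod_{s=p}^{p+k-1}(I-aA^{[s]})^{-1}$; multiplying by $D^{-1}$ on the right then gives, for $i\le j$,
$$
\bigl[f(B)\bigr]_{ij}=\Bigl(\prod_{\ell=i+1}^{j}\beta_{\ell}\Bigr)\prod_{s=i}^{j}\bigl(I-aA^{[s]}\bigr)^{-1}=\Bigl(\prod_{\ell=i+1}^{j}\beta_{\ell}\Bigr)\mathcal{Q}^{[i,j]},
$$
while $\bigl[f(B)\bigr]_{ij}=0$ for $i>j$; here empty products equal $I$, so $\bigl[f(B)\bigr]_{ii}=\mathcal{Q}^{[i,i]}$, and when $i=1$ only the defined parameters $\beta_2,\dots,\beta_j$ occur.

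It then remains to specialise the parameters. With $\beta_{\ell}\equiv1$ the formula reads $\bigl[f(B)\bigr]_{ij}=\mathcal{Q}^{[i,j]}$ for all $1\le i\le j\le M$, so every dynamic communicability matrix \eqref{dcm2}, hence every broadcast/receive vector \eqref{br2}, is a block of $f(B)$. With $\beta_{\ell}=e^{-b\Delta t_{\ell}}$ we instead get $\bigl[f(B)\bigr]_{ij}=e^{-b\sum_{\ell=i+1}^{j}\Delta t_{\ell}}\,\mathcal{Q}^{[i,j]}$; comparing this with Lemma~\ref{thm:rdcm} and using $e^{-b\Delta t_1}=0$ yields
$$
\mathcal{S}^{[j]}=\sum_{i=1}^{j}\bigl(1-e^{-b\Delta t_i}\bigr)\,\bigl[f(B)\bigr]_{ij}-I,
$$
so each running dynamic communicability matrix is an explicit linear combination of the blocks in the $j$th block-column of $f(B)$.

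I expect the one piece of real work to be the block evaluation of $\bigl(\sum_{k}(D^{-1}N)^{k}\bigr)D^{-1}$, where one has to track both the scalar products $\prod_{\ell}\beta_{\ell}$ and the left-to-right ordering of the resolvent factors so that it matches \eqref{dcm2} exactly; this is a routine induction (on $j-i$) once the splitting $I-B=D(I-D^{-1}N)$ is in hand. Everything else is bookkeeping together with the appeal to Lemma~\ref{thm:rdcm}.
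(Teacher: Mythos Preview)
Your argument is correct and reaches exactly the same block formula
\[
[f(B)]_{ij}=\Bigl(\prod_{\ell=i+1}^{j}\beta_\ell\Bigr)\prod_{s=i}^{j}(I-aA^{[s]})^{-1}
\]
that the paper obtains, and you finish with the same specialisation of $\beta_\ell$ and the same appeal to Lemma~\ref{thm:rdcm}. The route, however, is genuinely different. The paper proceeds by expanding $f(B)=\sum_{k\ge 0}B^{k}$ as a power series, shows that the $(i,j)$ block of $B^{k}$ is $\beta^{[i+1,j]}\alpha^{k+i-j}h_{k+i-j}(A^{[i]},\dots,A^{[j]})$ with $h_k$ the complete homogeneous symmetric polynomial (in ordered, noncommuting arguments), and then sums over $k$ using the generating-function identity for the $h_k$ to collapse the series into the product of resolvents. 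You instead factor $I-B=D(I-D^{-1}N)$ with $D^{-1}N$ strictly block upper-triangular and hence nilpotent, so that $(I-B)^{-1}$ is a \emph{finite} sum whose blocks can be read off directly. Your approach avoids any infinite series or symmetric-function identity and is arguably more elementary; the paper's approach, on the other hand, keeps the walk-counting interpretation visible, since each $B^{k}$ literally encodes length-$k$ dynamic walks and the $h_k$ make the combinatorics explicit. Either way the bookkeeping on the $\beta_\ell$ products and the left-to-right ordering of the resolvents is identical.
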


\begin{proof}
It is straightforward to show that the $k$th power of the matrix $B$ has the form
$$
B^k = \scriptscriptstyle{\begin{bmatrix}
\alpha^k h_k\left(A^{[1]}\right) & \beta_2\alpha^{k-1}  h_{k-1}(A^{[1]}, A^{[2]}) & \cdots & \prod_{\ell=2}^M\beta_{\ell} \alpha^{k-r}  h_{k-r}(A^{[1]}, \dots, A^{[M]})  \\
                   &    \alpha^k h_k\left(A^{[2]}\right)    & \ddots & \vdots \\
                   &                           &       \ddots         &  \beta_M \alpha^{k-1}  h_{k-1}(A^{[M-1]}, A^{[M]}) \\
                   &                               &  & \alpha^k h_k\left(A^{[M]}\right)          
\end{bmatrix}},
$$
where 
$$
h_k(x_1,x_2,\dots,x_n) = \sum_{l_1+l_2+\cdots l_n = k}{x_1}^{l_1}{x_2}^{l_2}\cdots{x_n}^{l_n}
$$ 
is the \emph{complete homogeneous symmetric polynomial} of degree $k$, $r = M-1$ and $h_k(\cdot, \cdot,\dots, \cdot) = 0$ if $k<0$.

In general, denoting block $(i,j)$, $(i,j = 1,\dots,M)$ of $B^k$ 
by $[B^k]_{ij}$, we have that 
$$
[B^k]_{ij} = \beta^{[i+1,j]}\alpha^{k+i-j} h_{k+i-j}(A^{[i]},\dots,A^{[j]}), \qquad i \leq j,
$$
where $\beta^{[i+1,j]}$ denotes the scalar $\prod_{\ell=i+1}^j \beta_{\ell}$.

The matrix-valued function $f(B) = \sum_{k=0}^\infty B^k$ has blocks 
$$
\left[f(B)\right]_{ij} = \beta^{[i+1,j]} \sum_{k=0}^\infty \alpha^{k+i-j} h_{k+i-j}(A^{[i]},\dots,A^{[j]}) = \beta^{[i+1,j]}\prod_{\ell = i}^j (I-\alpha A^{[\ell]})^{-1}. 
$$
Hence,  the dynamic communicability matrices $\mathcal{Q}^{[i,j]}$ can be obtained from the block $\left[f(B)\right]_{ij}$ setting $\beta_{\ell} = 1$, $\ell = 2, \dots, M$ and $\alpha = a$.

The running dynamic communicability matrices $\mathcal{S}^{[j]}$ are obtained starting from the blocks on the $j$th block-column. In particular, setting $\beta_{\ell} = e^{-b\Delta t_\ell}$,
$\alpha = a$ and $D = \diag(1-e^{-b\Delta t_1},\dots,1-e^{-b\Delta t_M}) \otimes I_n$, 
we have 
$$
\begin{aligned}
\left[\sum_{i=1}^j \left[D f(B)\right]_{ij}\right] - I &= \sum_{i=1}^j \left(1-e^{-b\Delta t_i} \right) \beta^{[i+1,j]} \mathcal{Q}^{[i,j]} - I \\
&= \sum_{i=1}^j \left(1-e^{-b\Delta t_i} \right) e^{-b \sum_{\ell = i+1}^j \Delta t_{\ell}} \mathcal{Q}^{[i,j]} - I.
\end{aligned}
$$ 
The statement now follows from Lemma~\ref{thm:rdcm}.
\end{proof}

Similar statements apply to the betweenness measures in (\ref{nb}) and (\ref{tb}).

\begin{theorem}\label{thm:fm2}
Let $\bar{A}^{[k]}_r$ denote the matrix obtained from $A^{[k]}$ by removing all the edges involving node $r$ and let $\{\widehat{A}^{[k,q]}\}_{k=1}^M$ be the adjacency matrix sequence obtained replacing $A^{[q]}$ with $0$. Then, for $f(x) = (1-x)^{-1}$, 
$$
\begin{aligned}
{\rm NB}_r &= \frac{1}{(n-1)^2 - (n-1)} \mathop{\sum\sum}_{i \neq j \neq r}\frac{\left[f(B)\right]_{1M}^{ij} - \left[f(\bar{B}_r)\right]_{1M}^{ij}}{\left[f(B)\right]_{1M}^{ij}}, \\
{\rm TB}^{[M,q]} &= \frac{1}{(n-1)^2 - (n-1)} \mathop{\sum\sum}_{i \neq j}\frac{\left[(f(B)\right]_{1M}^{ij} - [(f(\widehat{B}^{[q]})]_{1M}^{ij}}{\left[f(B)\right]_{1M}^{ij}},
\end{aligned}
$$
where $\bar{B_r}$ and $\widehat{B}^{[q]}$ are given by
$$
\begin{aligned}
\bar{B_r} &= \begin{bmatrix}
\alpha \bar{A}^{[1]}_r &     I    &        &           &   \\
        & \alpha \bar{A}^{[2]}_r &     I   &           &   \\
        &         & \ddots &   \ddots  &   \\
        &         &        & \alpha \bar{A}^{[M-1]}_r &  I \\
        &         &        &           & \alpha \bar{A}^{[M]}_r
\end{bmatrix} \\
\widehat{B}^{[q]} &= \begin{bmatrix}
\alpha A^{[1]} &     I    &        &           & &           &  \\
        & \alpha A^{[2]} &     I   &           & &           &  \\
        &         & \ddots &   \ddots  &   &           &\\
        &         &        & \alpha A^{[q-1]} &  I &           &\\
        &         &        &           & \alpha A^{[q+1]} &   I     &        \\
        &         &        &           &                  & \ddots & \ddots \\
        &         &        &           &       &           &   \alpha A^{[M-1]} & I\\
        &         &        &           &        &          &  & \alpha A^{[M]} \\
\end{bmatrix}
\end{aligned}
$$
and $\left[(f(B)\right]_{1M}^{ij}$ denotes the $(i,j)$th element of the $(1,M)$th block of the matrix $f(B)$.
\end{theorem}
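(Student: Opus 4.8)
The plan is to reduce Theorem~\ref{thm:fm2} to Theorem~\ref{thm:mf} by observing that $\bar B_r$ and $\widehat B^{[q]}$ are instances of the block matrix $B$ in \eqref{bigmat} with a modified adjacency sequence and with $\beta_\ell \equiv 1$, $\alpha = a$. For $\bar B_r$, the diagonal blocks are the edge-deleted matrices $\bar A^{[k]}_r$; for $\widehat B^{[q]}$, the relevant sequence is $\{A^{[k]}\}_{k\neq q}$ of length $M-1$ (equivalently, $\{\widehat A^{[k,q]}\}_{k=1}^M$ with the trivial factor $(I-a\cdot 0)^{-1}=I$ absorbed). The first step is therefore to invoke the block-formula from the proof of Theorem~\ref{thm:mf}, namely $[f(B)]_{ij} = \beta^{[i+1,j]}\prod_{\ell=i}^j (I-\alpha A^{[\ell]})^{-1}$, specialized to $i=1$, $j=M$, $\beta_\ell\equiv 1$, $\alpha=a$. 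This gives directly $[f(B)]_{1M} = \prod_{s=1}^M (I-aA^{[s]})^{-1} = \mathcal{Q}^{[M]}$, the full-interval dynamic communicability matrix from \eqref{dcm}.

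Second, I would apply the same specialized formula to $\bar B_r$, whose diagonal blocks are $\alpha\bar A^{[s]}_r$, to obtain $[f(\bar B_r)]_{1M} = \prod_{s=1}^M (I-a\bar A^{[s]}_r)^{-1} = \bar{\mathcal{Q}}^{[M]}_r$, exactly the matrix appearing in the definition \eqref{nb} of ${\rm NB}_r$. Substituting both identities into the right-hand side of the claimed formula for ${\rm NB}_r$ reproduces \eqref{nb} verbatim, so that equality holds. Third, the temporal case is analogous: applying the formula to $\widehat B^{[q]}$ — which has $M-1$ diagonal blocks running over $s\in\{1,\dots,M\}\setminus\{q\}$ — yields $[f(\widehat B^{[q]})]_{1M} = \prod_{s\neq q}(I-aA^{[s]})^{-1} = \widehat{\mathcal{Q}}^{[M,q]}$, the matrix in \eqref{tb}, and substitution into the claimed formula for ${\rm TB}^{[M,q]}$ recovers \eqref{tb}.

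The one point requiring a little care — and the closest thing to an obstacle — is the bookkeeping for $\widehat B^{[q]}$: its size is $(M-1)n\times(M-1)n$, not $Mn\times Mn$, and the block indices of $f(\widehat B^{[q]})$ are not literally the original time labels. One must check that the $(1,M-1)$ block (equivalently, the block labelled $(1,M)$ in the "time-indexed" numbering, with time $q$ skipped) is the correct product over $s\neq q$; this is immediate from the structure of the superdiagonal-identity factorization used in the proof of Theorem~\ref{thm:mf}, since deleting a middle block-row/column and reconnecting the chain with an identity simply removes the corresponding resolvent factor. Equivalently, one may keep the $Mn\times Mn$ size by using $\widehat A^{[k,q]}$ directly, in which case the $q$th factor is $(I - a\cdot 0)^{-1}=I$ and drops out automatically; I would remark on this equivalence to avoid ambiguity. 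With that settled, both identities follow by direct substitution, and the analogous remark — already made in the paper for Lemma~\ref{thm:rdcm} and Theorem~\ref{thm:mf} — about rescaling to prevent over/underflow applies here as well.
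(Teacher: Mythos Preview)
Your proposal is correct and follows essentially the same approach as the paper, which simply states that the proof ``follows along the same lines as that of Theorem~\ref{thm:mf}.'' Your write-up merely spells out those lines explicitly---including the minor bookkeeping about the reduced block dimension of $\widehat{B}^{[q]}$---and is a faithful expansion of what the authors intended.
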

\begin{proof}
A proof follows along the same lines as that of Theorem~\ref{thm:mf}.
\end{proof}

\subsection{Another formulation} \label{sec:supra}

An alternative block matrix formulation that was specifically designed for 
evolving networks appears in \cite{TMCPM15}.
Those authors define the  
 \emph{supra-centrality matrix}
to have the general form
\begin{equation}
  \MM
= 
 \left[
 \begin{array}{ccccc}
    \epsilon M^{[1]}   & I & 0 & \ldots  & 0\\
    I   &  \epsilon M^{[2]}   & I & \ddots  & \vdots \\   
   0 & I   &  \epsilon M^{[3]}    & \ddots & 0  \\
  \vdots & \ddots & \ddots &  \ddots & 	I	 \\
 0 & \ldots & 0 &  I & \epsilon M^{[M]} 
 \end{array}
 \right].
\label{eq:MM}
\end{equation}
Here, $M^{[k]}$ is an $n$ by $n$ centrality matrix based on 
$A^{[k]}$. 
The authors use the simple choice $M^{[k]} \equiv A^{[k]}$ to illustrate the idea, but mention that other static centrality functions could be used, such as the Katz 
\cite{Katz53} resolvent-based 
version
 $M^{[k]} \equiv (I - \alpha A^{[k]})^{-1}$  (which is the single time-point case of 
(\ref{dcm})).
It is then proposed in 
\cite{TMCPM15}
   to apply a standard static network centrality algorithm to the supra-centrality matrix 
 $\MM$. 
The parameter $\epsilon$
in (\ref{eq:MM}) 
 is included to account for the fact that the identity matrices represent
\lq\lq between layer\rq\rq\ connections 
that are inherently different from the \lq\lq within layer\rq\rq\ weights arising from the
network data.
A key element of (\ref{eq:MM})  is the appearance of identity matrices in the super- 
\emph{and sub-} block-diagonal positions. 
If the overall centrality measure applied to  $\MM$ is motivated by monitoring traversals  around the large, static,  
$Mn $ by $Mn$ network, then, because of the identity matrices on the sub-diagonal blocks, some of these traversals will be travelling \emph{backwards in time} with respect to the original time-stamped data.
Similarly,
suppose that
each 
$A^{[k]}$ is symmetric, so that edges are undirected in each time slice. 
Then with 
$M^{[k]} \equiv A^{[k]}$ or 
$M^{[k]} \equiv (I - \alpha A^{[k]})^{-1}$,
 we see that  $\MM$ is symmetric. 
 However, from the simple example mentioned in section~\ref{sec:cent}, where 
 $X$ meets $Y$ today and $Y$ meets $Z$ tomorrow, 
we can see that time's arrow introduces asymmetry, even when the individual
interactions are symmetric. 
 In Section~\ref{sec:comp}, we will perform some illustrative tests that compare results from 
(\ref{bigmat}) and 
(\ref{eq:MM}). 

\section{Computational tasks}\label{sec:comptask}

In this section we provide some useful ways to deal with the computation of the quantities defined in section~\ref{sec:cent}. In particular, we focus on the case of running broadcast and receive communicabilities given by~\eqref{rbr} since, as pointed out in~\cite{GHsr12}, the running dynamic communicability matrices 
$\mathcal{S}^{[j]}$ exhibit a typical ``fill in'' behavior.   



From Theorem~\ref{thm:mf}, setting $\beta_{\ell} = e^{-b\Delta t_\ell}$ and $\alpha = a$, we obtain
\begin{equation}\label{bil}
\begin{aligned}
{\mathcal{S}^{[j]}}\bm{1}_{\rm n} &= (\bm{d} \otimes I_{\rm n})^Tf(B)(\bm{e}_j \otimes \bm{1}_{\rm M}) - \bm{1}_{\rm n} \\
{\mathcal{S}^{[j]}}^T\bm{1}_{\rm n} &= (\bm{e}_j \otimes I_{\rm n})^Tf(B)^T(\bm{d} \otimes \bm{1}_{\rm n}) - \bm{1}_{\rm n},
\end{aligned}
\end{equation}
where $\bm{d} = [1,1-\beta_{2},\dots,1-\beta_{M}]^T$, $\bm{1}_{\rm M}$ and $\bm{1}_{\rm n}$ are vectors of all ones in $\R^{M}$ and $\R^{n}$, respectively. 

The formulation opens up two computational approaches: 
one based on bilinear forms involving matrix functions and one that focuses on the solution of a sparse linear system. In the following we will compare these methods in terms of execution time and accuracy.

\subsection{Use of quadrature formulas} \label{sec:quadform}

We are interested in computation of quantities of the form 
$$
\bm{u}^T f(B) \bm{v}, \qquad  \bm{u}, \bm{v} \in \R^{Mn},
$$ 
with $\bm{u},\bm{v}$ unit vectors and $f(B) = (I-B)^{-1}$ nonsymmetric. 

In particular,
$(\bm{u} = \bm{d} \otimes \bm{e}_i, \bm{v} = \bm{e}_j \otimes \bm{1}_{\rm M})$ and $(\bm{u} = \bm{e}_j \otimes \bm{e}_i, \bm{v} = \bm{d} \otimes \bm{1}_{\rm n})$, $i = 1, \dots, n$, for the broadcast and receive running communicabilities of node $i$, respectively.

In this case, since both the vectors $\bm{u}$ and $\bm{v}$ and the matrix $B$ are very sparse, the probability of breakdown during the computation is high. For this reason, it is convenient to add a dense vector to each initial vector (see~\cite{BDY}) and resort to a block algorithm. In particular, we use the nonsymmetric block Lanczos algorithm~\cite{BookGM} and pairs of block Gauss and anti-Gauss quadrature rules~\cite{CRS,FMRR,La}.

If $U = [\bm{u} \; \bm{1}]$ and $V = [\bm{v} \; \bm{1}]$, then we want to approximate the quantities 
$$
U^T f(B) V, \qquad  U, V \in \R^{Mn \times 2}. 
$$ 

The nonsymmetric block Lanczos algorithm applied to the 
matrix B with initial blocks $U_1 = U$ and $V_1 = V$ yields, after $\ell$ steps, the decompositions
\begin{equation*}
\begin{aligned}
B\left[U_1,\dots,U_\ell\right]
				&=\left[U_1,\dots,U_\ell\right]J_\ell+U_{\ell+1}\Gamma_\ell 					\bm{E}_\ell^T,\\
B^T\left[V_1,\dots,V_\ell\right]	&=\left[V_1,\dots,V_\ell\right]J_\ell^T+V_{\ell+1}\Delta_\ell \bm{E}_\ell^T,
\end{aligned}
\end{equation*}
where $J_\ell$ is the matrix 
\begin{equation}\label{JL}
J_\ell=\left(
\begin{array}{ccccc}
\Omega_1&\Delta_1^T&&&\\
\Gamma_1&\Omega_2&\Delta_2^T&&\\

&\ddots&\ddots&\ddots&\\
&&\Gamma_{\ell-2}&\Omega_{\ell-1}&\Delta_{\ell-1}^T\\
&&&\Gamma_{\ell-1}&\Omega_\ell

\end{array}
\right)\in\R^{2\ell\times 2\ell},
\end{equation}
and $\bm{E}_k = \bm{e}_k^T \otimes I_2$, for $k=1,2,\ldots,\ell$ are $2\times (2\ell)$ block matrices which contain $2\times 2$ zero blocks everywhere, except for the $k$th block, which coincides with the identity matrix $I_2$.
The $\ell$-block nonsymmetric Gauss
quadrature rule ${\pcal G}_\ell$ can then be expressed as
\begin{equation*}
{\pcal G}_\ell = \bm{E}_1^Tg(J_\ell) \bm{E}_1.
\end{equation*}

As shown in \cite{FMRR}, the
$(\ell+1)$-block nonsymmetric anti-Gauss rule can be computed in terms of 
the matrix $\widetilde{J}_{\ell+1}$ as
\begin{equation*}
{\pcal H}_{\ell+1}= \bm{E}_1^Tg(\widetilde{J}_{\ell+1}) \bm{E}_1,
\end{equation*}
where 
$$
\widetilde{J}_{\ell+1}=\left(\begin{array}{r|c}
&   \\
 \makebox[50pt][c]{\Large{$J_{\ell}$}}  &  \\ 
&   \sqrt{2}\Delta_{\ell}^T \\ \hline
  \sqrt{2}\Gamma_{\ell} &\Omega_{\ell+1}
\end{array}\right)\in\R^{2(\ell+1)\times 2(\ell+1)}.
$$

Since $f(x) = (1-x)^{-1}$, with $|x| < 1$, is analytic in a 
simply connected domain whose boundary encloses the spectrum of $B$ but is not close to it (see~\cite{FMRR} for details), the arithmetic mean 
\begin{equation}\label{FL}
F_\ell = \frac{1}{2}\left({\pcal G}_\ell + {\pcal H}_{\ell+1}\right)
\end{equation} 
between Gauss and anti-Gauss quadrature rules can be used as an approximation of the matrix-valued expression $U^T f(B) V$.

\subsection{Resolution of a sparse linear system}\label{sec:linsys}

Using the same notation as in subsection~\ref{sec:quadform}, we need to compute the quantities 
$$
\bm{u}^T(I-B)^{-1}\bm{v}, \qquad  \bm{u}, \bm{v} \in \R^{Mn}.
$$
This can be done by solving the sparse linear system $(I-B)\bm{x} = \bm{v}$ and then computing the scalar product $\bm{u}^T\bm{x}$. 

The linear system can be solved either directly or iteratively. The peculiarity of the block formulation allows us to have at hand a regular matrix splitting. In fact, we have $I \geq 0$, $B \geq 0$ and $\rho(B) < 1$. Therefore, the iterative method
$$
\bm{x}^{(k+1)} = B\bm{x}^{(k)} + \bm{v},
$$
with given starting vector $\bm{x}^{(0)}$, converges to the solution $\bm{x}$.

Another classical approach to solve linear systems is the \emph{LSQR method} that makes use of the Golub-Kahan algorithm.
After $\ell$ steps of this method with starting vector $q_1 = \bm{v}$, the solution $\bm{x}^{(\ell)} \in \R^{Mn}$ is defined as 
$$
\bm{x}^{(\ell)} = P_\ell \bm{y}^{(\ell)} = \beta_1 P_\ell C_{\ell+1,\ell}^{\dagger}\bm{e}_1,
$$ 
that is, $\bm{y}^{(\ell)} \in \R^{\ell}$ is the solution of the least squares problem
$$
\min{\| C_{\ell+1,\ell} \bm{y}^{(\ell)} - \beta_1 \bm{e}_1 \|_2},
$$
where $C_{\ell+1,\ell}$ is computed via the Golub-Kahan algorithm, $\beta_1 = \|\bm{v}\|$ and $\bm{e}_1$ is the first vector of the canonical base of size $\ell+1$.

\section{Computational tests}\label{sec:comp}

In this section we perform some numerical tests in order to judge effectiveness, 
 both from the computational and modeling points of view.
First, we compare the methods described in the previous section against the original approach presented in~\cite{GHsr12}, that is, by using the recursive formula~\eqref{rdcm}.
Then, we show the relevance of the upper triangular block formulation compared with 
the supra-centrality matrix approach given in~\cite{TMCPM15}.

\subsection{Computation using the new block approach}\label{sec:newblock}

As a first set of experiments we compare different ways to deal with the computation of the quantities defined by~\eqref{rbr}. In particular we focus on the computation of the running broadcast communicabilities $\mathcal{S}^{[j]}\bm{1}$. 
We recall that the computation of these communicabilities at a given time step can not be recovered using the same information at a previous time step, that is, the update of the running broadcast communicabilities $\mathcal{S}^{[j]}\bm{1}$ needs the computation of the whole matrix $\mathcal{S}^{[j]}$. 

We analyze the following methods:
\begin{description}
\item[original] is the original approach presented in~\cite{GHsr12}. In particular, we use the \mbox{\textsf{mldivide}} MATLAB function to compute the inverse of the matrix $\left(I-aA^{[j]}\right)$ in recursion~\eqref{rdcm}.
\item[quadrules] is the approach based on the Gauss and anti-Gauss quadrature rules described in subsection~\ref{sec:quadform}. More precisely, we perform as many steps of the block nonsymmetric Lanczos algorithm as necessary to obtain a relative distance
$$
\frac{\|{\pcal G}_\ell - {\pcal H}_{\ell+1}\|_{\max}}{\|F_\ell\|_{\max}}, \qquad \text{with} \quad \|X\|_{\max}:=\max_{1\le i,j\le k}|X_{ij}|, 
$$  
less than $10^{-3}$.
\item[linsolv] is the first procedure described in subsection~\ref{sec:linsys} in which we solve the big linear system $(I-B)\bm{x} = \bm{v}$ using the \textsf{mldivide} MATLAB function.    
\item[iterative] is the second method proposed in subsection~\ref{sec:linsys}, namely the iterative approach based on the regular matrix splitting $I - B$. We perform as many iterations as necessary to reach a relative accuracy of $10^{-3}$ on the difference between two consecutive approximations.
\item[lsqr] is the method based on the solution of the linear system obtained by using the \textsf{lsqr} MATLAB function. In particular, we set the tolerance to $10^{-3}$.
\end{description}

We want to test the performance of the methods when the size of the matrix $B$ in~\eqref{bigmat} increases. This can be done various ways. As a first approach, in order to simulate the evolution on a given set of nodes, we
independently sample $M$ times from the same static network model with a fixed number of nodes $n$. This was be done in MATLAB using the package CONTEST by Taylor and Higham~\cite{TH}. As a second approach, we generate the $M$ matrices by using the 
evolving network model proposed and analyzed in~\cite{GHP12}.
Here, the network sequence corresponds to the sample path of a discrete time Markov chain,
 and hence the adjacency matrices are correlated  over time.
 All computations were carried out with MATLAB version 9.0 (R2016a) 64-bit 
for Linux, in double precision arithmetic, on an Intel(R) Xeon(R) computer with 32 Gb RAM.

Table~\ref{tab:1} shows the results obtained for the scale free random graph model generated using the \textsf{pref} function of the CONTEST toolbox, which implements a preferential attachment model. We set a fixed number of nodes $n = 10^3$ and $M$ goes from $10$ to $100$, in order to test the performance of the methods when the number of time steps is increasing. 
The table displays the time required to compute the running broadcast communicabilities of the $n$ nodes of the evolving network and the absolute error 
$$
\|\bm{x} - \tilde{\bm{x}} \|_{\infty} = \max_i |x_i - \tilde{x}_i|,
$$  
where $\tilde{\bm{x}}$ is the approximation and $\bm{x}$ is the vector computed with the original approach.

\begin{table}[htb]
\small\setlength{\tabcolsep}{4.2pt}
\begin{center}
\caption{Execution time and absolute error for the computation of the running broadcast communicabilities with $n=1000$ and $M=10, \dots, 100$. The network model is obtained from the \textsf{pref} function of the CONTEST toolbox}
\label{tab:1}
\begin{tabular}{c|c|cc|cc|cc|c}
    & \textsf{original}    & \multicolumn{2}{c}{\textsf{quadrules}} & \multicolumn{2}{|c|}{\textsf{linsolv}} & \multicolumn{2}{c|}{\textsf{iterative}} & \textsf{lsqr}\\
$M$ &   time   & time &  err. & time &  err. & time &  err. & time \\ \hline
10  & 1.68e+01 & 2.24e+01 & 5.75e-03 & 3.18e-01 & 1.30e-14 & 3.15e-02 & 8.87e-04 & 9.55e-02\\
20  & 3.46e+01 & 4.36e+01 & 3.34e-03 & 4.05e+00 & 1.49e-14 & 1.17e-02 & 1.06e-03 & 3.87e-02\\
30  & 5.35e+01 & 7.42e+01 & 3.44e-03 & 1.42e+01 & 1.58e-14 & 1.49e-02 & 1.07e-03 & 4.81e-02\\
40  & 7.06e+01 & 1.06e+02 & 6.23e-03 & 2.77e+01 & 1.59e-14 & 1.12e-02 & 1.05e-03 & 6.56e-02\\
50  & 8.94e+01 & 1.22e+02 & 4.81e-03 & 7.07e+01 & 1.65e-14 & 1.28e-02 & 1.04e-03 & 6.31e-02\\
60  & 1.16e+02 & 1.52e+02 & 2.52e-03 & 6.08e+01 & 1.42e-14 & 1.40e-02 & 1.01e-03 & 6.76e-02\\
70  & 1.33e+02 & 1.95e+02 & 1.85e-03 & 5.66e+01 & 1.53e-14 & 1.62e-02 & 1.02e-03 & 8.01e-02\\
80  & 1.45e+02 & 2.44e+02 & 3.30e-03 & 1.04e+02 & 1.37e-14 & 1.62e-02 & 9.22e-04 & 9.85e-02\\
90  & 1.73e+02 & 2.63e+02 & 6.73e-03 & 5.71e+01 & 1.59e-14 & 1.83e-02 & 1.08e-03 & 1.27e-01\\
100 & 1.94e+02 & 2.48e+02 & 3.94e-03 & 1.53e+02 & 1.38e-14 & 1.94e-02 & 1.06e-03 & 1.49e-01\\
\end{tabular}
\end{center}
\end{table}

The results clearly show that the quadrature rules based on the block Lanczos algorithm do not improve the performance of the original method, while both the methods based on the resolution of the big linear system work very well. In particular, since we are interested in the rank of the nodes rather than the value of the index, the iterative method gives good results and it is very fast. It is also worth noting that the \textsf{lsqr} method, whose results coincide with those obtained by using the \textsf{mldivide} function, is very fast and very accurate.

\begin{table}[htb]
\small\setlength{\tabcolsep}{4.2pt}
\begin{center}
\caption{Execution time and absolute error for the computation of the running broadcast communicabilities with $n=1000$ and $M=10, \dots, 100$. The network model is obtained from the \textsf{renga} function of the CONTEST toolbox}
\label{tab:2}
\begin{tabular}{c|c|cc|cc|cc|c}
& \textsf{original} & \multicolumn{2}{c}{\textsf{quadrules}} & \multicolumn{2}{|c|}{\textsf{linsolv}} & \multicolumn{2}{c|}{\textsf{iterative}} & \textsf{lsqr}\\
$M$ &   time   & time &  err. & time &  err. & time &  err. & time \\ \hline
10 & 1.61e+01 & 6.14e+01 & 3.25e-02 & 2.84e-01 & 8.22e-15 & 1.80e-01 & 5.63e-03 & 1.75e-01\\ 
20 & 3.34e+01 & 1.27e+02 & 2.18e-02 & 8.90e-01 & 7.11e-15 & 2.48e-02 & 8.89e-03 & 1.71e-01\\ 
30 & 5.24e+01 & 1.83e+02 & 2.31e-02 & 1.54e+00 & 8.44e-15 & 2.70e-02 & 1.10e-02 & 2.41e-01\\ 
40 & 6.87e+01 & 2.23e+02 & 3.93e-02 & 2.34e+00 & 8.22e-15 & 2.58e-02 & 1.04e-02 & 3.14e-01\\ 
50 & 8.90e+01 & 2.96e+02 & 1.26e-01 & 2.76e+00 & 7.55e-15 & 2.78e-02 & 8.21e-03 & 4.45e-01\\ 
60 & 1.07e+02 & 3.40e+02 & 2.18e-02 & 3.60e+00 & 7.11e-15 & 2.61e-02 & 1.20e-02 & 4.38e-01\\ 
70 & 1.24e+02 & 4.02e+02 & 1.41e-02 & 4.24e+00 & 7.77e-15 & 2.99e-02 & 9.52e-03 & 5.30e-01\\ 
80 & 1.39e+02 & 4.55e+02 & 8.73e-02 & 4.94e+00 & 7.33e-15 & 3.02e-02 & 1.02e-02 & 5.90e-01\\ 
90 & 1.60e+02 & 6.01e+02 & 6.77e-02 & 5.66e+00 & 7.77e-15 & 3.06e-02 & 1.10e-02 & 6.54e-01\\ 
100 & 1.74e+02 & 7.09e+02 & 1.66e-02 & 6.51e+00 & 6.66e-15 & 3.31e-02 & 8.91e-03 & 7.09e-01\\ 
\end{tabular}
\end{center}
\end{table}

To investigate the behavior of the methods proposed with respect to the network model, we performed the same computation as in Table~\ref{tab:1} using a range dependent random graph generated using the \textsf{renga} function of the CONTEST toolbox. Table~\ref{tab:2} shows the results obtained setting $n = 10^3$ and varying $M$ from $10$ to $100$. The results show that the block quadrature rule method and the iterative resolution of the big linear system are slower and the error is greater than that obtained from \textsf{pref}. However, the performance of the direct solution of the linear system is faster and gives a small error. Again, the \textsf{lsqr} method is the best among those proposed.

\begin{table}[htb]
\small\setlength{\tabcolsep}{4pt}
\begin{center}
\caption{Execution time and absolute error for the computation of the running broadcast communicabilities with $M=10$ and $n=1000, \dots, 10000$. The network model is obtained from the \textsf{pref} function of the CONTEST toolbox}
\label{tab:3}
\begin{tabular}{c|c|cc|cc|cc|c}
& \textsf{original} & \multicolumn{2}{c}{\textsf{quadrules}} & \multicolumn{2}{|c|}{\textsf{linsolv}} & \multicolumn{2}{c|}{\textsf{iterative}} & \textsf{lsqr}\\
$n$ &   time   & time &  err. & time &  err. & time &  err. & time \\ \hline
1000 & 1.79e+01 & 2.15e+01 & 5.75e-03 & 3.19e-01 & 1.30e-14 & 2.63e-02 & 8.87e-04 & 9.40e-02\\ 
2000 & 1.37e+02 & 7.17e+01 & 1.11e-02 & 1.14e+00 & 2.42e-14 & 1.68e-02 & 8.11e-04 & 5.24e-02\\ 
3000 & 4.72e+02 & 1.66e+02 & 1.24e-02 & 2.43e+00 & 3.70e-14 & 2.55e-02 & 8.33e-04 & 7.58e-02\\ 
4000 & 1.11e+03 & 2.71e+02 & 5.44e-03 & 4.47e+00 & 5.46e-14 & 2.83e-02 & 7.17e-04 & 1.07e-01\\ 
5000 & 2.18e+03 & 4.02e+02 & 2.90e-02 & 8.05e+00 & 6.32e-14 & 3.49e-02 & 7.46e-04 & 1.31e-01\\ 
6000 & 3.92e+03 & 5.64e+02 & 3.28e-03 & 1.15e+01 & 8.00e-14 & 4.20e-02 & 7.71e-04 & 1.61e-01\\ 
7000 & 5.88e+03 & 6.80e+02 & 2.15e-02 & 1.61e+01 & 8.35e-14 & 4.87e-02 & 7.35e-04 & 1.97e-01\\ 
8000 & 9.17e+03 & 9.19e+02 & 1.30e-02 & 2.32e+01 & 9.15e-14 & 5.67e-02 & 7.34e-04 & 2.27e-01\\ 
9000 & 1.31e+04 & 1.43e+03 & 8.18e-02 & 3.39e+01 & 1.27e-13 & 6.52e-02 & 7.30e-04 & 2.77e-01\\ 
10000 & 1.91e+04 & 1.78e+03 & 1.08e-02 & 3.95e+01 & 9.34e-14 & 7.39e-02 & 7.17e-04 & 3.16e-01\\
\end{tabular}
\end{center}
\end{table}

We now investigate the behavior of the methods when the number of time steps is fixed and the size of the network increases. Table~\ref{tab:3}--~\ref{tab:4} show the results obtained when $M = 10$ and $n$ goes from $10^{3}$ to $10^{4}$ with respect to the \textsf{pref} model and the \textsf{renga} model, respectively. It is clear that the original method strongly depends on the size of the matrices, while the methods proposed here work quickly and effectively. It is worth noting that we need to wait more than one hour to obtain the value of the index for a network with $6000$ nodes or more by using the original approach. 
We see that the method based on the iterative solution of the linear system is not only the fastest among the five, but tolerates very well the change of dimension, making this approach a very good method to deal with large networks.

\begin{table}[htb]
\small\setlength{\tabcolsep}{3.5pt}
\begin{center}
\caption{Execution time and absolute error for the computation of the running broadcast communicabilities with $M=10$ and $n=1000, \dots, 10000$. The network model is obtained from the \textsf{renga} function of the CONTEST toolbox}
\label{tab:4}
\begin{tabular}{c|c|cc|cc|cc|c}
& \textsf{original} & \multicolumn{2}{c}{\textsf{quadrules}} & \multicolumn{2}{|c|}{\textsf{linsolv}} & \multicolumn{2}{c|}{\textsf{iterative}} & \textsf{lsqr}\\
$n$ &   time   & time &  err. & time &  err. & time &  err. & time\\ \hline
1000 & 1.59e+01 & 5.87e+01 & 3.25e-02 & 2.72e-01 & 8.22e-15 & 2.69e-02 & 5.63e-03 & 1.92e-01\\ 
2000 & 1.29e+02 & 2.23e+02 & 3.17e-02 & 5.73e-01 & 8.22e-15 & 3.31e-02 & 7.50e-03 & 2.11e-01\\ 
3000 & 4.70e+02 & 4.68e+02 & 1.16e+00 & 8.73e-01 & 8.88e-15 & 4.77e-02 & 7.29e-03 & 3.39e-01\\ 
4000 & 1.11e+03 & 8.21e+02 & 1.34e-01 & 1.23e+00 & 8.88e-15 & 6.40e-02 & 6.55e-03 & 4.67e-01\\ 
5000 & 2.76e+03 & 1.35e+03 & 7.80e-02 & 1.55e+00 & 8.66e-15 & 8.03e-02 & 6.53e-03 & 6.76e-01\\ 
6000 & 6.11e+03 & 1.99e+03 & 9.68e-02 & 2.22e+00 & 7.33e-15 & 1.15e-01 & 5.59e-03 & 8.20e-01\\ 
7000 & 1.12e+04 & 2.62e+03 & 9.17e-02 & 2.24e+00 & 8.44e-15 & 1.13e-01 & 6.40e-03 & 8.07e-01\\ 
8000 & 1.90e+04 & 4.28e+03 & 1.05e-01 & 2.80e+00 & 9.10e-15 & 1.56e-01 & 6.14e-03 & 9.24e-01\\ 
9000 & 3.17e+04 & 4.73e+03 & 2.73e-01 & 3.20e+00 & 1.04e-14 & 2.07e-01 & 5.98e-03 & 1.05e+00\\ 
10000 & 4.59e+04 & 6.14e+03 & 3.32e-01 & 3.59e+00 & 9.99e-15 & 1.95e-01 & 5.73e-03 & 1.18e+00\\
\end{tabular}
\end{center}
\end{table}

\begin{table}[htb]
\small\setlength{\tabcolsep}{3.3pt}
\begin{center}
\caption{Execution time and relative error for the computation of the running broadcast communicabilities with $M=10$ and $n=1000, \dots, 10000$. The network sequence is obtained from the \emph{triadic closure model}.}
\label{tab:5}
\begin{tabular}{c|c|cc|cc|cc|c}
& \textsf{original} & \multicolumn{2}{c}{\textsf{quadrules}} & \multicolumn{2}{|c|}{\textsf{linsolv}} & \multicolumn{2}{c|}{\textsf{iterative}} & \textsf{lsqr}\\
$n$ &   time   & time &  err. & time &  err. & time &  err. & time\\ \hline
1000 & 2.04e+01 & 1.98e+02 & 1.12e-01 & 1.99e+00 & 1.18e-15 & 5.55e-01 & 1.12e-03 & 7.86e-01 \\ 
2000 & 1.70e+02 & 1.08e+03 & 3.95e-01 & 8.67e+00 & 2.08e-15 & 8.69e-01 & 2.32e-03 & 3.53e+00 \\ 
3000 & 6.02e+02 & 3.35e+03 & 5.72e-02 & 2.36e+01 & 1.68e-15 & 2.45e+00 & 2.95e-03 & 8.08e+00 \\ 
4000 & 1.40e+03 & 7.13e+03 & 1.76e-01 & 4.27e+01 & 2.41e-15 & 5.43e+00 & 3.92e-03 & 1.58e+01 \\ 
5000 & 2.82e+03 & 1.32e+04 & 7.41e-01 & 7.33e+01 & 1.99e-15 & 8.47e+00 & 3.88e-03 & 2.52e+01 \\ 
6000 & 5.15e+03 & 2.35e+04 & 8.78e-01 & 1.10e+02 & 2.39e-15 & 1.30e+01 & 3.87e-03 & 3.98e+01 \\ 
7000 & 7.74e+03 & 3.36e+04 & 9.98e-01 & 1.65e+02 & 2.47e-15 & 1.90e+01 & 4.10e-03 & 5.54e+01 \\ 
8000 & 1.27e+04 & 5.91e+04 & 9.94e-01 & 2.22e+02 & 2.60e-15 & 2.72e+01 & 5.02e-03 & 7.57e+01 \\ 
9000 & 1.70e+04 & 7.35e+04 & 1.00e+00 & 3.19e+02 & 3.52e-15 & 3.34e+01 & 4.72e-03 & 1.22e+02 \\ 
10000 & 2.31e+04 & 1.05e+05 & 1.00e+00 & 1.38e+03 & 3.03e-15 & 4.64e+01 & 4.90e-03 & 1.88e+02 \\ 
\end{tabular}
\end{center}
\end{table}

\begin{table}[htb]
\small\setlength{\tabcolsep}{3.8pt}
\begin{center}
\caption{Execution time and relative error for the computation of the running broadcast communicabilities with $n=1000$ and $M=10, \dots, 100$. The network sequence is obtained from the \emph{triadic closure model}.}
\label{tab:6}
\begin{tabular}{c|c|cc|cc|cc|c}
& \textsf{original} & \multicolumn{2}{c}{\textsf{quadrules}} & \multicolumn{2}{|c|}{\textsf{linsolv}} & \multicolumn{2}{c|}{\textsf{iterative}} & \textsf{lsqr}\\
$M$ &   time   & time &  err. & time &  err. & time &  err. & time\\ \hline
10 & 1.95e+01 & 3.20e+02 & 9.88e-02 & 9.26e-01 & 1.16e-15 & 2.63e+00 & 3.69e-03 & 2.41e+00 \\ 
20 & 3.95e+01 & 1.60e+03 & 2.73e-01 & 2.54e+00 & 8.98e-16 & 5.32e+00 & 5.53e-03 & 3.46e+00 \\ 
30 & 5.83e+01 & 3.99e+03 & NaN & 5.24e+00 & 2.27e-15 & 8.16e+00 & 6.55e-03 & 4.88e+00 \\ 
40 & 8.27e+01 & 1.09e+04 & NaN & 8.82e+00 & 2.04e-15 & 1.08e+01 & 7.53e-03 & 6.28e+00 \\ 
50 & 9.91e+01 & 1.43e+04 & NaN & 1.37e+01 & 4.44e-15 & 1.34e+01 & 9.21e-03 & 8.03e+00 \\ 
60 & 1.31e+02 & 2.23e+04 & NaN & 1.90e+01 & 3.86e-15 & 1.57e+01 & 9.75e-03 & 9.25e+00 \\ 
70 & 1.46e+02 & 2.77e+04 & NaN & 2.50e+01 & 1.84e-15 & 1.86e+01 & 1.10e-02 & 1.19e+01 \\ 
80 & 1.63e+02 & 3.68e+04 & NaN & 3.29e+01 & 1.31e-15 & 2.21e+01 & 1.17e-02 & 1.27e+01 \\ 
90 & 1.82e+02 & 4.80e+04 & NaN & 4.24e+01 & 1.43e-14 & 2.49e+01 & 1.23e-02 & 1.44e+01 \\ 
100 & 2.17e+02 & 6.31e+04 & NaN & 4.59e+01 & 3.26e-15 & 2.72e+01 & 6.69e-02 & 1.68e+01 \\  
\end{tabular}
\end{center}
\end{table}

As a second set of numerical experiments we make use of the \emph{triadic closure model} developed in~\cite{GHP12}. Starting from an 
Erd\"os-R\'enyi network model~\cite{ER59} with a given edge density, we generate a sequence of $M$ matrices in which the network at time point $k+1$ is built starting from the network at the previous time point. In particular, the expected value of $A^{[k+1]}$ given $A^{[k]}$ is 
$$
\mathcal{F}(A^{[k]}) = (1 - \tilde{\omega}) A^{[k]} + (\bm{1}^T\bm{1} - A^{[k]}) \circ (\delta\bm{1}^T\bm{1} + \epsilon (A^{[k]})^2),
$$
where $\tilde{\omega} \in (0,1)$ is the death rate, $\delta\bm{1} + \epsilon (A^{[k]})^2$ is the birth rate, with $0 < \delta \ll 1$ and $0 < \epsilon(n-2) < 1 - \delta$, and $\bm{1}$ is the vector of all ones.  This model is based on the social science hypothesis that \lq\lq friends of friends\rq\rq\ tend to become friends; that is, new edges are more likely between pairs of nodes that are separated by many paths of length two. 

Tables~\ref{tab:5}--\ref{tab:6} display the results obtained by setting $\tilde{\omega} = \delta = 20/n^2$ and $\epsilon = 5/n^2$, where $A^{[1]}$ is an Erd\"os-R\'enyi network model with an edge density of $0.1$ and $0.3$, respectively. We report the execution time and the relative error between the approximate solution and the solution obtained by using the original approach.
The results show that the computation based on the quadrature rules is ineffective in this 
case---Nan indicates that convergence was not attained. This can be explained by taking into account the sparsity level of the matrices involved in the computation, which does not allow us to gain advantage from the use of matrix-vector products. On the contrary, the behavior of the methods based on the solution of the linear system is satisfactory, but again the sparsity level influences the performance of the iterative method based on the matrix splitting. This fact is more evident in Table~\ref{tab:6}, where we obtain comparable results from the methods based on the solution of the linear system.

The computed examples point out the effectiveness of the new block formulation
relative to the original approach, especially when the dimension of the 
individual matrices is high. It is clear that \emph{inverse-free} algorithms based on matrix-vector products are efficient for very sparse networks. Moreover, these kinds of algorithms work well on modern machines, since the computation can be fully parallelized.  

\subsection{New block formulation vs. supra-centrality matrix}

Having used the new block formulation (\ref{bigmat})
to develop efficient computational  strategies, we now compare the relevance of the associated 
centrality measures with those of the supra-centrality version 
(\ref{eq:MM}). 
We first conduct a numerical test based  on a synthetic time-dependent network.
We generate the network in such a way that one node has a temporal connectivity 
pattern that allows it to initiate a disproportionate number of traversals.
We note that this type of hierarchical pattern of interactions has been found, either 
explicitly  or implicitly, in empirical studies of online behavior.
For example, 
in  the context of online forums, 
Graham and Wright
\cite{GW14}
singled out   
\emph{agenda-setters}, who are responsible for 
 new thread creation, and thereby
influence \emph{subsequent} interactions, writing that 
\lq\lq 
  The inclusion of agenda-setting reflects our view that
influence is not limited to the volume of posts alone.\rq\rq\  
Huffaker et al.\  
\cite{Huff09} 
discovered hierarchy within 
the use of
chat features in a  
 Massive
Multiplayer Online (MMO) role-playing game, and found that 
in general \lq\lq players send
messages to higher-level experts.\rq\rq\ 
It is therefore useful to have centrality  tools that can discover and quantify this type of 
influence in the time-dependent setting.

To build a simple data set, 
we use $n =  200$ nodes and $M = 4$ time levels.
We begin by setting each $A^{[k]}$ to be an independent, directed random graph where the probability of an edge from node $i$ to node $j$ at time $k$ is given by $4/n$, independently of 
$i$, $j$ and $k$.   
In this way each node has an expected out degree of $4$ at each time level and there is
no structure to the interactions. 
We then remove all edges that emanate from node $1$.
Finally, we  repeat the following procedure 16 times:
\begin{itemize}
 \item at time level $k=1$ connect node $1$ to a uniformly chosen node, $n_2$, 
 \item at time level $k=2$ connect node $n_2$ to a uniformly chosen node, $n_3$,
 \item at time level $k=3$ connect node $n_3$ to a uniformly chosen node, $n_4$,
\item at time level $k=4$ connect node $n_4$ to a uniformly chosen node, $n_5$.
\end{itemize}
In this way, node $1$ is given $16$ edges that are guaranteed to have a follow-on effect 
in terms of dynamic walks around the network.
In the above  construction, target nodes $n_1, n_2, \ldots$ 
are chosen uniformly and independently 
across $1,2,\ldots,n$, and in a final processing step, repeated edges within a time level and self loops are deleted.     

The upper left picture in 
Figure~\ref{fig:test1} scatter plots, for each node, the aggregate out degree on the horizontal axis against the dynamic broadcast communicability, as given by the row sums of a
normalized 
version, 
$\mathcal{S}^{[M]}/
\|
\mathcal{S}^{[M]} \|_2
$, 
of the 
dynamic communicability matrix
from
(\ref{rdcm}). 
Here we used $\alpha = 0.9/\rho^\star$, where 
$\rho^\star = 4.2$ is the maximum spectral radius over the time levels, and 
$b = 1$ with $\Delta t = 1$.
In the picture, node $1$ is highlighted with a star symbol.
We see that despite having only a typical aggregate out degree, node $1$  
produces by far the highest communicability score, which reflects the fact that its edges have a 
knock-on effect through time.
 In the upper right picture in
 Figure~\ref{fig:test1}, we repeat the test with the time levels taken in reverse order. 
In this case, the built-in dynamic walks \emph{finish} at node $1$, rather than starting there, and   
 the benefit of these walks is now shared more evenly among the randomly chosen initial and intermediary nodes. The performance of node $1$ is now more 
compatible with its aggregate out degree and hence the dynamic communicability measure 
does not highlight    any special structure. 

In the same way, 
the lower left and lower right pictures in Figure~\ref{fig:test1} scatter plot aggregate out degree against
a nodal centrality measure based on 
the   
supra-centrality matrix
 (\ref{eq:MM}) for the original and time-reversed data, respectively.
Here, we used static  Katz centrality  matrices along the diagonal,
so 
$
M^{[k]} = ( I - \alpha A^{[k]})^{-1} 
$, 
with $\alpha$ chosen as in the first two experiments.
To maintain compatibility we also used $\epsilon = e$.
For our overall centrality measure, we again used the Katz resolvent, that is 
$   ( I - {\widehat \alpha}  \MM)^{-1} $,
with 
$\widehat \alpha$ chosen to be a factor $0.9$ times the reciprocal of the 
spectral radius of $ \MM$.  
To obtain a single measure for each node, 
we used the \emph{marginal node centrality} measure defined in 
\cite{TMCPM15}.
We see that this type of centrality calculation, which does not maintain the time ordering, fails 
to highlight the role of node $1$.

\begin{figure}[t]
\begin{center}
\includegraphics[width=0.8\textwidth]{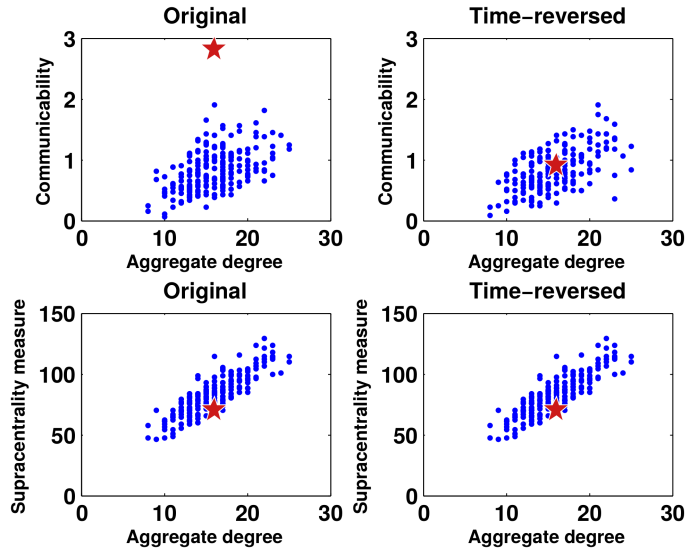}
\caption{Node centrality scatter plots for a synthetic network. The special node, 
$1$, is marked with a star. 
In each case the horizontal axis shows aggregate out degree.
Upper and lower left pictures use dynamic communicability and 
supra-centrality-based marginal node centrality, respectively, for the vertical axis.
The upper and lower right pictures repeat the experiment with the data in reverse time order. 
}
\label{fig:test1}
\end{center}
\end{figure}

Next we
use a set of simulated voice call data from 
the IEEE VAST 2008 Challenge 
\cite{DBLP:conf/ieeevast/GrinsteinPLOSW08}. 
This dynamic data set is 
designed to represent interactions between 
a controversial socio-political movement, and it incorporates some  
unusual temporal activity.
The data involves 
$n = 400$ cell phone users, giving 
a complete set of time stamped  pairwise calls between them. Each call 
is logged via the send and receive nodes, a 
start time  
and a duration in seconds. 
Among the extra information supplied by the 
competition designers was the strong suggestion that 
one node acts as the \lq\lq ringleader\rq\rq\ within a key inner circle.
Based on 
analyses submitted by challenge teams, we believe that 
this ringleader has ID 200, and the rest of 
the inner circle  consists of four nodes with IDs 1, 2, 3 and 5.
Further details can 
currently be found at  
\begin{verbatim}
http://www.cs.umd.edu/hcil/VASTchallenge08/index.htm
\end{verbatim}

This data was studied in terms of temporal centrality in \cite{GHrsoc}, 
where it was shown that a continuous-time  version of broadcast centrality can 
identify the key players, even though they are not the dominant users in terms of aggregate call 
time. 

For our discrete-time experiment, 
we use 30 minute time windows over days 1 to 6. So, the symmetric 
adjacency matrix $A^{[k]}$ records whether nodes $i$ and $j$ spent any time 
interacting in the $k$th 30 minute time window.
To compute the broadcast centrality  
(\ref{rbr}) 
we took $\alpha$ to be a factor of  $0.9$ times the reciprocal of the 
maximum spectral radius of the
$\rho(A^{[k]})$  over $k$, and $b = 0.1$ with $\Delta t = 1$.
As in the previous experiment, we chose comparable parameters for the  
supra-centrality
matrix  
 (\ref{eq:MM}).
Here, we used static  Katz centrality  matrices along the diagonal,
so 
$
M^{[k]} = ( I - \alpha A^{[k]})^{-1} 
$, 
with the same $\alpha$ and with $\epsilon = e^{b}$.
The overall centrality measure was then based on row sums of the Katz resolvent,   
$   ( I - {\widehat \alpha}  \MM)^{-1} $,
with 
$\widehat \alpha$ chosen to be a factor $0.9$ times the reciprocal of the 
spectral radius of $ \MM$. 

Figure~\ref{fig:vast} scatter plots the broadcast centrality against the
supra-centrality based measure. 
Here the ringleader node is marked with a red diamond and the other four 
inner circle nodes are marked with a red five-point star. 
We see that both centrality measures highlight two particular inner circle members 
and all four inner circle members appear in the top seven of both centrality rankings.
However, for the ringleader, marked with a diamond, broadcast centrality 
ranks the node 3rd, whereas 
the supra-centrality measure places this node 48th (out of the $400$ nodes).
We conclude that, in this experiment, the time-respecting measure is better able 
to discover the importance of the ringleader node.

\begin{figure}[t]
\begin{center}
\includegraphics[width=0.8\textwidth]{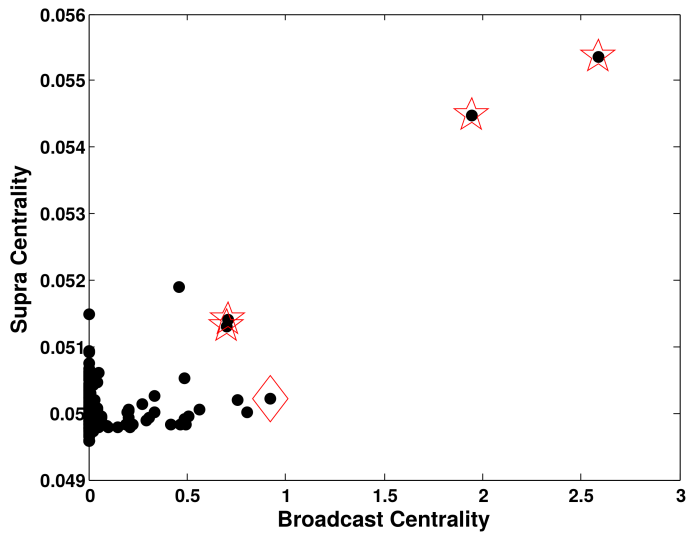}
\caption{Scatter plots of broadcast and supra-centrality based centrality for a 
$400$ node 
voice call network.
Here, five particular nodes are known to be influential.
The ringleader node is marked with a red diamond and four other inner circle nodes are 
marked with red stars.
}
\label{fig:vast}
\end{center}
\end{figure}

We note that these conclusions are consistent with the results in 
\cite{MH16}, where an algorithm was proposed to quantify the 
asymmetry caused by the arrow of time.
Our computations also make it clear that, in the Katz setting, use of the
supra-centrality  matrix 
also requires a third parameter to be chosen, for the resolvent system involving
$\MM$.

\section{Conclusions}  \label{sec:conc} 
This work focused on the context where a time-dependent sequence of networks is provided 
for   a given set of nodes. Equivalently, we have an ordered sequence of adjacency matrices, or a three-dimensional tensor. 
It is often useful to  to express  the tensor as a large block matrix, a process known as  
flattening, reshaping, unfolding or matricizing. This corresponds to
representing the system as a single, static network in which nodes make multiple appearances.
Such a representation has the advantage that 
a variety of computational approached can be designed and tested. In particular,
we found that an iterative method based on a regular matrix splitting was particularly effective.  
 However, construction of the block matrix representation    
 must be undertaken with care. In the case of extracting 
resolvent-based network centrality measures that are motivated through the concept of traversals through the network, we highlighted 
a block matrix structure that makes available time-respecting centralities and
illustrated its practical advantages over an alternative formulation.

Interesting avenues for future work in this context include
\begin{itemize}
\item developing strategies for choosing algorithm  parameters, a key example being the length of the time windows, where there is a trade-off between dimensionality and sparsity,      
\item considering matrix functions other than the resolvent,
\item computing other walk-based centrality measures, such as 
    total communicability \cite{BK13} or hub-authority communicability
   \cite{BEK13}, 
\item studying the flattening issue for more general multilayer networks where time is one dimension out of many.
\end{itemize}

\bigskip

\noindent
\textbf{Acknowledgements}
CF would like to thank the Department of Mathematics and Statistics at the University of 
Strathclyde for hospitality during which this work was initiated.

\end{document}